\newtheorem{definition}{Definition}
\newtheorem{thm}{Theorem}
\newtheorem{rem}{Remark}
\newtheorem{lem}{Lemma}
\newtheorem{prop}{Proposition}
\begin{document}
	\title{Symmetries in Linear Programming for Information Inequalities}
	\author{\IEEEauthorblockN{Emirhan G\"urp\i nar}
		\IEEEauthorblockA{\textit{LIRMM, Universit\'e de Montpellier, CNRS}\\			Montpellier, France \\
			emirhan.gurpinar@lirmm.fr}
		}
	\maketitle
	\begin{abstract}
		We study the properties of secret sharing schemes, where a random secret value is transformed into shares distributed among several participants in such a
		way that only the qualified groups of participants can recover the secret value. We improve the lower bounds on the sizes of shares  for several specific problems of secret sharing. To this end, we use the method of non-Shannon-type information inequalities going back to Z.~Zhang and R.W.~Yeung. We employ and extend the linear programming technique that allows to apply  new information inequalities indirectly,  without even writing them down explicitly. To reduce the  complexity of the problems of linear programming involved in the bounds we extensively use symmetry considerations.
	\end{abstract}
	\begin{IEEEkeywords}
		Shannon entropy,  non-Shannon-type information inequalities, secret sharing, linear programming, symmetries, copy lemma, entropy region
	\end{IEEEkeywords}
	\section{Introduction}
	Secret sharing was introduced in \cite{B79,Sh79}, see \cite{P13} for a more recent survey. The central problem in this field is to compute for a given access structure its optimal information ratio, i.e.,  the infimum of the size of the largest share in proportion to that of the secret. In general, this problem remains widely open. In this paper we study several particular access structures (all of them are based on matroids) and improve known lower bounds (see \cite{FKMP20}) for their information ratios.
	
	We use a combination of several known techniques. First of all, we apply new non-Shannon-type information inequalities (more specifically, we use the \emph{copy lemma} introduced implicitly in \cite{ZhY98};  see Section~\ref{ss:cp} for a more detailed discussion of this technique). We apply non-Shannon-type inequalities indirectly, using the linear programming approach, as proposed in \cite{FKMP20,GR19}. 
	We use very extensively the symmetry considerations. Though each of these elements was known, the combination of these methods turns out to be surprisingly efficient. The bounds we have improved are summarized in Table~\ref{table}. We believe that similar techniques can be productive not only in secret sharing but also in the other applications of non-Shannon-type information inequalities.
	
	
	
	In Section~\ref{s:inequalities} we explain the context in more detail and give the necessary formal definitions. In Section~\ref{s:Sym} we discuss symmetry tools and prove our main theorem. In the last section we present the improved bounds that we prove for secret sharing schemes.
	
	\section{Entropic Vectors and Information Inequalities}\label{s:inequalities}
	\subsection{Entropy Vectors}
Let $X=(X_i)_{1\le i\le n}$ be a sequence of jointly distributed random variables with a finite range. We denote by $h_X$ 
the \emph{entropy vector} whose coordinates  are the values of Shannon entropy for all  sub-tuples of $X$. This vector  consists of $2^n-1$ components  $h_I=H((X_i)_{i\in I})$ for each ${\varnothing\neq I\subseteq\llbracket1,n\rrbracket}$.

	Following  \cite{ZhY98}, we use the notation $\Gamma_n^*$ for the set of all entropy vectors of dimension $2^{n-1}$ (for all distributions of $n$-tuples of random variables) and  $\overline{\Gamma_n^*}$ for the closure of this set.
	
	The linear inequalities for $2^n-1$ variables that are true for all points in $\Gamma_n^*$ (interpreted as inequalities for entropies of jointly distributed random variables) are called \emph{information inequalities}. The linear combinations of those of the form 
\[
H(X_{I}) + H(X_{J}) \ge H(X_{I\cup J}) + H(X_{I\cap J})
\]	
(which is equivalent in the standard notation to the inequality $I(X_I:X_J|X_{I\cap J})\ge0$) are called Shannon-type (classical) inequalities.
	
	The vectors with $2^n-1$ coordinates (not necessarily entropic) satisfying all classical inequalities is noted $\Gamma_n$
	
	Note that $\Gamma_n^*\subset\overline{\Gamma_n^*}\subset\Gamma_n$, that $\Gamma_n^*$ is closed under addition and that $\overline{\Gamma_n^*}$ is a convex cone \cite{ZhY97}.
	\subsection{Non-Shannon-Type Inequalities}\label{ss:cp}
	There exists infirmation inequalities that are not Shannon-type. 
	The first  non-Shannon-type inequality was discovered by Zhang and Yeung \cite{ZhY98}.
	Now we know infinitely many examples of non-Shannon-type inequality. 
	All known non-Shannon-type information inequalities were proven with one of two techniques:
	the Ahlswede--K\"orner (AK) lemma (introduced in \cite{AGK76}, discussed in \cite{CK82} and \cite{K13}) or the copy lemma (implicitly used in \cite{ZhY98}, formally introduced in \cite{DFZ06}, discussed in \cite{M07a}, \cite{DFZ11} and \cite{GR19}, and with the name book ineuqualities in \cite{Cs14}).
	
	\begin{lem}[Copy Lemma]\label{l:copy}
		Let $X,Y,Z$ be three jointly distributed random vectors. There exists random vectors $X',Y', Z', Z''$ such that the following three conditions are satisfied:
		\begin{enumerate}
			\item $(X,Y,Z)$ has the same joint distribution as $(X',Y',Z')$,
			\item $(X',Z')$ has the same joint distribution as $(X',Z'')$,
			\item $Z''$ is independent of $(Y',Z')$ given $X'$, i.e.  \hbox{$I(Z'':Y',Z'|X')=0$}.
		\end{enumerate}
		We say that $Z''$ is a $Y'$-copy of $Z'$ over $X'$. By abuse of notation we identify $X',Y',Z'$ with $X,Y,Z$ respectively and say that $Z''$ is a $Y$-copy of $X$ over $Z$.
	\end{lem}
	
This lemma can be extended to the case when $X,Y,Z$ are not individual variables but tuples of jointly distributed variables.

How to use the copy lemma to prove a non-Shannon-type inequality?	
The new variables created with the copy lemma are also subject to information inequalities. Given a set of random variables, we can apply (one or many times) the copy lemma and then write down Shannon-type inequalities for all the involved random variables (the original ones along with the new ones added by copy lemma), together with the constraints on the entropy values from the conditions  of the copy lemma. For instance, if we apply the copy lemma to create a new variable $Z'$ that  is a $Y$-copy of $Z$ over $X$, we get the linear constraints
	\[ H(X,Z)=H(X,Z') \text{ and } I(Z':Y,Z|X)=0\]
We get more linear constraints if $X$ and $Z'$ are tuples that contain more than one random varibale. For example if we take a $C$-copy of $D,E$ over $A,B$, we create two new variables $D',E'$ defined by the following inequalities:
	\begin{verbatim}
	H(D)=H(D')
	H(A,D)=H(A,D')
	H(B,D)=H(B,D')
	H(A,B,D)=H(A,B,D')
	H(E)=H(E')
	H(A,E)=H(A,E')
	H(B,E)=H(B,E')
	H(A,B,E)=H(A,B,E')
	H(D,E)=H(D',E')
	H(A,D,E)=H(A,D',E')
	H(B,D,E)=H(B,D',E')
	H(A,B,D,E)=H(A,B,D',E')
	I(D',E':C,D,E|A,B)=0		
	\end{verbatim}

These linear equalities and inequalities together may imply some linear inequality for the entropy values of the initial random variables; and in some cases (if the copy lemma was applied in a clever way), the resulting information inequality can be non-Shannon-type.

	Until recently all known proofs of non-Shannon-type information inequalities could be expressed in two equivalent form: as an argument with the copy lemma and as an argument with the AK lemma. Kaced has shown \cite{K13} that this is not a pure coincidence. In fact, \emph{every} proof of a non-Shannon-type information inequality based on the AK lemma can be rephrased as an equivalent  proof of the same inequality with the copy lemma. Also, a weak conversion of this rule is true:  if a proof of an information inequality uses the copy lemma where each single instance of the ``copy'' operation creates only one new variable (technically, this means that Lemma~\ref{l:copy} is used in such a way that  $Z$ is a single variable and not a tuple of several variables), then this argument can be rephrased as an argument using the Ahlswede--K\"orner lemma.
	
	L.~Csirmaz observed in \cite{Cs21} that the copy lemma is theoretically stronger than the AK lemma. That is, some argument with the ``copying'' operation creates copies at once for $2$ (or more) random variables, then this argument in general cannot be reproduced with the technique of the AK lemma. However, we know very few evidence of this strength of the general version of the the copy lemma in natural application. To the best of our knowledge, before \cite{Cs21} the only examples of an argument that substantially uses the general version of the copy lemma were \cite{DFZ11} and \cite{GR19}.
	
	In this paper we come up with a few new applications of the general version of the copy lemma (for which an equivalent argument based on the AK lemma seems doubtful, perhaps even impossible). Moreover, we combine several important ideas, which make the set of used non-Shannon-type information inequalities more powerful:
	\begin{itemize}
		\item the general version of the copy lemma that we have discussed  above;
		\item implicit usage of new non-Shannon-type inequalities in the sense of \cite{FKMP20} and  \cite{GR19}  (the technique of linear programming permits us to apply new non-classical information inequalities without even revealing and writing them down explicitly);
		\item we extensively  use symmetries of the implied problems, which permits to decrease significantly the dimension of the relevant problem of linear programming and to reduce dramatically the computational complexity of our problem (see analysis and applications of symmetries for information inequalities in \cite{ZhT17,GR19});
		\item all our proofs are computer-assisted: we prove new results in information theory (more specifically, new lower bounds for information ratio of secret sharing schemes) that hardly can be found manually; in each case, we produce with the help of a computer, a linear program representing our problem of information theory, and then 
		obtain the required bounds using linear program solvers  (computer-assisted proofs were used in similar contexts in \cite{DFZ11,PVY13,GR19,FKMP20}).
	\end{itemize}
	None of these major ideas is new, each one was discussed and used in a similar context. However, we find surprising how efficient can be a combination of these ideas when they work together. Using the non-classical information inequalities implicitly implied by the combination of these techniques, we have improved known bounds for the information ratio  of several access structures. In what follows we discuss each of these ideas in more detail: the implicit use of non-classical inequalities in subsection~\ref{s:lower-bounds-copy-lemma}, symmetries in the section with the same name, and the different aspects of linear programming on different sections.
	\subsection{Secret Sharing and Matroids}
	
	One of the usual ``benchmarks'' for the techniques of non-classical information inequalities are lower bounds for the information ratio of secret sharing schemes.
	Let us remind that secret sharing was introduced in \cite{B79} and \cite{Sh79}.
	An \emph{access structure} for secret sharing among $n$ parties is a subdivision of all subsets of participants $\mathcal{P}(\llbracket1,n\rrbracket)$ into two classes:
	 the class of \emph{accepted} (or \emph{authorized}) and the class of \emph{not accepted} (or \emph{unauthorized}) subsets of $\llbracket1,n\rrbracket$. Each accepted subset  of parties should have an access to the secret key; no unauthorized party should have any information on the secret.  It is assumed that the property of being accepted is monotone: a superset of an accepted set is also accepted, a subset of an unauthorized set is also unauthorized.  Observe that such an access structure is determined uniquely by the family of all minimal (by inclusion) accepted subsets of parties.
	
	A standard example of an access structure is a \emph{threshold structure}: we can take as the set of accepted sets as those containing at least $\left\lfloor \frac n2\right\rfloor$ parties and accordingly define the sets that contain strictly less than $\left\lfloor \frac n2\right\rfloor$ parties as unauthorized.
	
 A \emph{secret sharing scheme} for a given access structure is defined as a joint distribution  $(s_0, s_1,\dots,s_n)$ satisfying the following conditions for each $J = \{j_1, \ldots j_k\} $:
\begin{equation}\label{eq:secret-sharing}
	\begin{array}{ll}
	H(s_0| {s_{J}})=0,& \text{ if }J \text{ is an accepted set},\\
	H(s_0| {s_{J}})=H(s_0),&  \text{ if }J \text{ is an unauthorized set}.
	\end{array}
\end{equation}
The random variable $s_0$ is 	understood as the \emph{secret key} and $s_j$ for $j=1,\ldots, n$ are the \emph{shares} given to each party.

	The \emph{information ratio} of a secret sharing scheme is the proportion of the size of the largest key to that of the secret, i.e. $\max_{i}\frac{H(s_i)}{H(s_0)}$. The information ratio of an access structure is the infimum of the information ratios of the secret sharing schemes on the access structure.

A general problem in information theory is to find the information ratio for a given access structure. It is proven by Ito, Saito and Nishizeki (see \cite{ISN89})
that any access structure as we defined above (monotone), admits a secret sharing scheme realizing it. It can easily be seen that the information ratio is at least $1$ (Let $x$ be a share, $\{x,y_1,\dots,y_i\}$ a minimal accepted set and $z$ the secret, then $H(x)\ge I(x:z|y_1,\dots,y_i)=H(z|y_1,\dots,y_i)-H(z|x,y_1,\dots,y_i)=H(z)$.). Secret sharing schemes which attain this lower bound are called \emph{ideal}. For every $n$, there exists an access structure with $n$ parties, which has an information ratio of at least $n/\log_2n$, as proven by Csirmaz \cite{Cs97}; this is the best lower bound we know for the maximal information ratio of access structures \cite{FRGR19}. 

	\subsection{Access Structures on Matroids}\label{acc_str_stu}
Among all possible access structures, one end of the spectrum is represented by the access structures that admit \emph{ideal} secret sharing schemes, i.e. those schemes where  the information ratio is equal to $1$. It is known that ideal secret sharing is closely connected with combinatorics of matroids.
(The definition of a matroid was introduced in \cite{W35}, see also \cite{O92} for a detailed introduction to the theory of matroids). 
More specifically, let $M = ( E , {\cal I} )$ be a matroid with a ground set $E$ and  the family of independent sets ${\cal I}\subset {\cal P}(E)$.
Let us fix  an element $p\in E$ and identify the other elements of the ground set with parties of a secret sharing scheme.
Following \cite{BD91}, we  define the corresponding  access structure as follows: we let  minimal accepted set be sets $C\subset E\setminus \{p\}$ such that
$C \cup \{p\}$  is a \emph{circuit} (a minimal dependent set) in the matroid.
	
It is known that ideal secret sharing is possible only if the access structure can be defined (as explained above) in terms of some matroid $M$;  on the other hand, for all access structures that cannot be defined in terms of matroids, the information ratio is at least  $\frac32$, see \cite{MFP10}. Thus, it is interesting to study the access structures in between these extremal cases: access structures defined on matroids but without ideal secret sharing. 

The matroids defined on $7$ or less points are all \emph{linearly representable}, and all corresponding access structures are known to have an ideal secret sharing. Thus, the problem of secret sharing is getting non-trivial for the access structures defined on matroids with a ground set of $8$ points. The corresponding access structure would consist of  7 participants (we will denote them  $\{1,2,3,4,5,6,7\}$, to be consistent with the notation of \cite{FKMP20}). 

We will focus on a few access structures whose study was initiated in \cite{FKMP20}. All these access structures  defined on matroids having some nice geometric interpretation. 
They are named after the matroids in \cite{O92}, from which they are derived. As usual, each access structure can be defined by their minimal authorized sets: 
\begin{table}[ht]
	\centering
	\begin{tabular}{ |c|c| }
		\hline
		\thead{Acccess Structure} & \thead{List of Minimal Authorized Sets} \\
		\hline
		$\mathcal{A}$ & \makecell{123, 145, 167, 246,\\ 257, 347, 356, 1247} \\
		\hline
		$\mathcal{A}^*$ & \makecell{123, 145, 167, 246, 257,\\ 347, 1356, 2356, 3456, 3567} \\
		\hline
		$\mathcal{F}$ & \makecell{123, 145, 167, 246, 257,\\ 347, 356, 1247, 1256} \\
		\hline
		$\mathcal{F}^*$ & \makecell{123, 145, 167, 246, 257, 1347, 1356,\\ 2347, 2356, 3456, 3457, 3467, 3567} \\
		\hline
		$\widehat{\mathcal{F}}$ & \makecell{123, 145, 167, 246, 257, 347,\\ 1256, 1356, 2356, 3456, 3567} \\
		\hline
		$\mathcal{Q}$ & \makecell{123, 145, 167, 246, 257, 347,\\ 1247, 1256, 1356, 2356, 3456, 3567} \\
		\hline
		$\mathcal{Q}^*$ & \makecell{123, 145, 167, 246, 257, 1247, 1347,\\ 1356, 2347, 2356, 3456, 3457, 3467, 3567} \\
		\hline
	\end{tabular}
\end{table}

The ultimate goal is to find the optimal information ratio for each of these structures (and study the connection of optimal ratio with the combinatorial properties of matroids). This goal was not achieved in 	\cite{FKMP20}, and it is not achieved in our work. However, we take a new step in this direction and improve the known lower bound for optimal information ratio 
of these $7$ access structures.

	\subsection{Lower Bounds for Secret Sharing  From the Copy Lemma}
	\label{s:lower-bounds-copy-lemma}
	
How to prove a lower bound for the information ratio of a certain access structure? The usual approach uses  the technique of information inequalities. We write down the equalities \eqref{eq:secret-sharing} that define the access structure, and all Shannon type inequalities for the involved random variables, and then try to combine these equalities and inequalities to derive a conclusion 
\begin{equation} \label{eq:info-ratio}
 \max \{ H(s_i) \} \ge r\cdot H(S_0)
\end{equation}
for a certain real number $r$. If we succeed, this means that the information ratio of this access structure is at least $r$. Such an argument can be found, for example, in \cite{CSGV93}.

For some access structures the proposed simple scheme can be improved: we can add non-Shannon-type inequalities for entropies of the involved random variables; the new (non-classical) constraint may help to prove \eqref{eq:info-ratio} with a larger value of $r$. Proofs following this scheme can be found, e.g., in \cite{BLP08} and \cite{MB11}.

As we prove new non-Shannon-type inequalities with the help of the copy lemma, the entire proof of a lower bound for the information ratio can be subdivided into two big steps:

\smallskip
\noindent
\emph{Step 1:} Apply one or several times the copy lemma, write down Shannon-type inequalities for the involved random variables, and deduce a new non-Shannon-type inequality.

\smallskip
\noindent
\emph{Step 2:} Write the conditions \eqref{eq:secret-sharing}, write down classical inequalities for Shannon entropies of the involved random variables, add a non-Shannon-type inequality proven on Step~1, and deduce \eqref{eq:info-ratio} for some specific $r$.
\smallskip

Historically,  Step~1 and Step~2 were often done in two different publications (typically done by different groups of authors), often with years between the first and the second one
(see the section 6 of \cite{MFP10}). The main disadvantage of this approach is that we cannot know in advance which new inequality should be proven on Step~1 so that it could be useful to eventually use in Step~2. It turns out that the two steps can be merged together, so that we do not even need to reveal explicitly the ``useful'' non-Shannon-type inequality:

\smallskip
\noindent
\emph{Merged steps 1+2:} Write the conditions \eqref{eq:secret-sharing}, apply one or several times the copy lemma, write down Shannon-type inequalities for the involved random variables, and deduce $\eqref{eq:info-ratio}$ for some specific $r$.
\smallskip

This type of argument was discussed in detail in \cite{GR19}. A similar approach (with the AK lemma instead of the copy lemma) was used earlier in \cite{FKMP20, BBEFP20}.

The mentioned approach has an important inherent disadvantage: 	each new random variable added by the copy lemma doubles the dimension of the final linear program, which increases dramatically the computational complexity of the problem. There is one tool that helps to mitigate this flaw: reduce the required applications of the copy lemma and decrease the dimension of the problem of linear programming. This tool is symmetries.

The symmetries of the random variables in \eqref{eq:secret-sharing} can be preserved despite the use of non symmetric applications of copy lemma, by adding symmetry conditions to the linear program. These extra conditions may improve the $r$ in \eqref{eq:info-ratio} if the copy lemma applications in consideration are not symmetric. More detailed discussion of this method is given in the next section, precisely from Theorem~\ref{thm:sym} to (including) subsection~\ref{ss:a1}.
	
	\section{Symmetries}\label{s:Sym}
		\subsection{Entropic Vectors and the Existence of Symmetric Solutions}
			In this subsection we talk about the use of symmetries on the sets of almost entropic vectors. 
			In what follows we assume that we are focused on the properties of some ``symmetric'' subset $E$ of $\overline{\Gamma_n^*}$:
			\begin{definition}
				We say that $E$ is \emph{symmetric for a permutation} $\sigma$ if for every element $h=(h_{\{1\}},\dots,h_{\{1,\dots,n\}})\in E$, we have $\sigma\cdot h=(h_{\sigma\cdot\{1\}},\dots,h_{\sigma\cdot\{1,\dots,n\}})\in E$. We say that it is \emph{symmetric for a permutation group} $G<Sym_n$ if it is symmetric for every element of $G$.
				
				Similarly, we say that an element $f$ of the dual space (a linear form on $E$) is invarinat for $\sigma$ if $f\cdot h=f\cdot(\sigma^{-1}\cdot h)$ for all $h$. 
				
			\end{definition}
			
			\begin{lem}\label{vec} Let $E$ be convex and symmetric for a group $G$ and suppose we want to optimize a scalar product $f\cdot h$ with $h\in E$ and that $f$ is invariable under $G$. Then we can restrain this problem on a subset of $E$ which may have lower dimension: $\max f \cdot h$ over $E$ is equal to the maximum of this linear form on the subset of $E$ that respects the symmetries, namely
 \[E\cap\{h\mid \forall\sigma\in G, \sigma\cdot h=h\}\]
 
			\end{lem}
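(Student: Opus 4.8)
The plan is to use the averaging (Reynolds) operator attached to the finite group $G$. Given an arbitrary $h\in E$, I would form the barycenter $\bar h:=\frac{1}{|G|}\sum_{\sigma\in G}\sigma\cdot h$ and check three things: (i) $\bar h\in E$; (ii) $\bar h$ lies in the fixed subspace $\{h\mid \forall\sigma\in G,\ \sigma\cdot h=h\}$; and (iii) $f\cdot\bar h=f\cdot h$. Together these show that the maximum of $f\cdot h$ over the symmetric part of $E$ is at least $f\cdot h$ for every $h\in E$, hence at least $\max_{h\in E} f\cdot h$; the reverse inequality is immediate, since the symmetric part is contained in $E$.

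For (i): the action of $\sigma$ on a coordinate vector is simply the permutation of the $2^n-1$ coordinates induced by $I\mapsto\sigma\cdot I$, so it is linear. Since $E$ is symmetric for $G$, each $\sigma\cdot h\in E$; since $E$ is convex it contains the average of these $|G|$ points (and $|G|<\infty$ because $G<Sym_n$, so $\bar h$ is a genuine convex combination). For (ii): for any $\tau\in G$, linearity of the action and the substitution $\sigma\mapsto\tau^{-1}\sigma$ give $\tau\cdot\bar h=\frac{1}{|G|}\sum_{\sigma\in G}(\tau\sigma)\cdot h=\frac{1}{|G|}\sum_{\sigma'\in G}\sigma'\cdot h=\bar h$. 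For (iii): invariance of $f$ means $f\cdot(\sigma\cdot h)=f\cdot h$ for every $\sigma\in G$ (apply the definition with $\sigma^{-1}$ in place of $\sigma$), so by linearity $f\cdot\bar h=\frac{1}{|G|}\sum_{\sigma\in G}f\cdot(\sigma\cdot h)=f\cdot h$.

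There is essentially no hard step: this is the classical fact that optimizing an invariant linear form over a symmetric convex set can be restricted to the fixed subspace. The only points that need care are bookkeeping: verifying that the coordinate action is genuinely linear (so that convex combinations and the pairing with $f$ interact as claimed) and that the finiteness of $G$ is what makes $\bar h$ a convex combination rather than merely a limit of such. If one worries about whether the extrema are attained when $E$ is not compact, I would phrase the argument with suprema instead: the map $h\mapsto\bar h$ then sends any maximizing sequence in $E$ to a maximizing sequence inside the symmetric part with identical values of $f$, which is enough.
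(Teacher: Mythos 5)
Your proposal is correct and uses essentially the same argument as the paper: averaging a (near-)optimal point over its $G$-orbit, then invoking convexity of $E$, symmetry of $E$ under $G$, and invariance of $f$ to conclude the optimum is attained on the fixed subspace. Your version is in fact slightly more careful than the paper's, since you handle the case where the maximum might not be attained by working with maximizing sequences.
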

		\begin{proof}
			If a point $h'\in E$ is an optimal solution and $f\cdot h'=a$, then the same optimal value is achieved on every element in $h'^G$ (the orbit of $h'$ under actions by elements of $G$), because $f$ is invariant under $G$. All this orbit belongs to $E$ since $E$ is symmetric for $G$. By convexity of $E$, the symmetric vector $h = \frac1{|G|}\sum_{\sigma\in G}\sigma\cdot h'$ belongs to $E$, and $f\cdot h = \frac1{|G|}f\cdot\sum_{\sigma\in G}\sigma\cdot h' = a$. Thus, the optimal value is attained on a symmetric vector.
			\end{proof}
		\begin{rem}\label{last_par_vec}
			Note that even if $E$ is not symmetric for $G$, if there exists a symmetric (for $G$) subset of $E$ that contains an optimal (for $f$) vector, then we can still use these symmetries ($G$) on $E$ without needing to explicitly express that subset.
		\end{rem}

		\subsection{Symmetries for Linear Programs}
		
		In this subsection, we combine the symmetry considerations with technique from Section~\ref{s:lower-bounds-copy-lemma}. 
		
	
	The linear programs for finding lower bounds on secret sharing were discussed in \cite{PVY13} and used with copy lemmas in \cite{GR19}, here we reformulate it as a proposition:
		\begin{prop}\label{p:linear-program-for-ratio}
			Let $A$ be an access structure with $r<n$ participants and $(s_0, s_1,\ldots, s_r)$ be a secret sharing scheme for this access structure. We extend this distribution by adding  random variables $s_{r+1}, \ldots, s_{n-1}$.
			 The linear program $P$ described below provides a lower bound on the information ratio of $A$:
			
				$\min  x$
				subject to
				\begin{enumerate}[label=\roman*]
				\item $x\ge h_T$ for every singleton $T\subseteq\{2,\dots,r+1\}$
				\item classical information inequalities for $(h_S)_{\varnothing\neq S\subseteq\{1,2,\dots,n\}}$
				\item the information equalities \eqref{eq:secret-sharing} for the entropies of $(s_0, s_1,\ldots, s_r)$ that define 
				 the access structure $A$,
				 \item the normalization condition $h_{\{1\}}=1$.
				\item the equalities for entropies that define each of the random  variables  $s_{r+1}, \ldots, s_{n-1}$ as a copy of  other variables (with smaller indices), obtained by an instance of an application of the copy lemma.
				\end{enumerate}
		\end{prop}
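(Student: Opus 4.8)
The plan is to establish that, for \emph{every} secret sharing scheme realizing the access structure $A$, the information ratio of that scheme is at least the optimal value $\mathrm{OPT}(P)$ of the linear program $P$; passing to the infimum over all such schemes then shows that $\mathrm{OPT}(P)$ is a lower bound on the information ratio of $A$. The engine of the argument is simply that the entropy vector of any distribution satisfies all the classical inequalities of item~(2), together with the copy lemma, which guarantees that an \emph{extended} scheme exists and satisfies the equalities of item~(5).

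First I would fix an arbitrary secret sharing scheme $(s_0,s_1,\dots,s_r)$ for $A$ with $H(s_0)>0$, identifying the coordinate $1$ with $s_0$ and the coordinates $2,\dots,r+1$ with $s_1,\dots,s_r$. Reading off item~(5) the prescribed sequence of copy operations, I would apply the tuple version of Lemma~\ref{l:copy} repeatedly to build the auxiliary variables $s_{r+1},\dots,s_{n-1}$, each one obtained as a copy of variables with strictly smaller indices. This iteration is well founded precisely because condition~1 of Lemma~\ref{l:copy} leaves the joint distribution of the already-defined variables unchanged; in particular the final joint distribution restricted to $(s_0,\dots,s_r)$ is the scheme we started from, so the secret sharing equalities still hold for it.

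Next I would verify feasibility. Let $h=(h_S)_{\varnothing\neq S\subseteq\{1,2,\dots,n\}}$ be the entropy vector of $(s_0,\dots,s_{n-1})$, rescaled by the factor $1/H(s_0)$. As a positive multiple of a genuine entropy vector, $h$ satisfies all classical information inequalities, hence item~(2); the rescaling is harmless since these inequalities are homogeneous. The equalities of item~(3) hold because $(s_0,\dots,s_r)$ is a secret sharing scheme for $A$, and those of item~(5) hold by the construction via the copy lemma; both families are homogeneous, so they survive the rescaling. Item~(4) holds by the choice of the scaling factor. Taking $x=\max\{h_T : T\subseteq\{2,\dots,r+1\}\text{ a singleton}\}$ makes $(x,h)$ a feasible point of $P$ with $x=\max_{1\le i\le r}H(s_i)/H(s_0)$, the information ratio of the chosen scheme.

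Since $\mathrm{OPT}(P)\le x$ for this feasible point, we obtain $\mathrm{OPT}(P)\le\max_i H(s_i)/H(s_0)$ for every secret sharing scheme of $A$, and taking the infimum over schemes shows that $\mathrm{OPT}(P)$ is at most the information ratio of $A$, as claimed. I expect the only delicate point to be bookkeeping rather than mathematics: one must check that the specific sequence of copy-lemma applications used to introduce $s_{r+1},\dots,s_{n-1}$ produces exactly the system of linear equalities placed in item~(5), and that ``copy of variables with smaller indices'' is respected so that the definitions are not circular. Everything else reduces to homogeneity of the constraints and the elementary observation that a feasible point of $P$ need not itself be almost-entropic — entropy is invoked only to certify item~(2) for the single point we exhibit.
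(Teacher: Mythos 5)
Your proof is correct and matches the approach the paper relies on: the paper does not give a standalone proof of this proposition (it defers to \cite{PVY13} and \cite{GR19}, and the reasoning reappears inside the proof of Theorem~\ref{thm:sym}, phrased as ``the LP feasible region is a relaxation $E'\supset E$ of the true set of normalized almost-entropic vectors of schemes''), and your pointwise version --- every scheme, extended via the copy lemma and rescaled by $1/H(s_0)$, yields a feasible point whose objective value is that scheme's information ratio --- is the same argument seen from the other side. You also correctly flag the two points that need care (homogeneity under rescaling, and well-foundedness of the copy-lemma iteration), so nothing is missing.
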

		
		We will refer to the sets of conditions above as items (i), (ii), (iii), (iv) and (v) in the rest of this subsection. 
	
		We will apply the following  proposition to 7 access structures mentioned in the subsection~\ref{acc_str_stu}.

	
Let us explain now how we can use symmetries to ``amplify'' the linear programs providing lower bounds for information ratio of secret sharing schemes.
\begin{thm}\label{thm:sym}
Let $P$ be a linear program from Proposition~\ref{p:linear-program-for-ratio} for  some access structure, and let
\hbox{$G<Sym(\{2,3,\dots,r+1\})$} 
be the symmetry group of the access structure mentioned in the item (iii) at the beginning of this subsection. Suppose the objective function is symmetric under $G$. Then we can add the constraints  $h_S=h_{\sigma\cdot S}$ for all $\sigma\in G$ and $S\subseteq\{1,2,\dots,r+1\}$ as extra conditions to the linear program $P$. The resulting linear program provides a lower bound on the minimal ratio of this access structure.
\end{thm}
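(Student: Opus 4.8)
The plan is to reduce Theorem~\ref{thm:sym} to Lemma~\ref{vec} by exhibiting the feasible region of $P$ as a convex set $E$ that contains a $G$-symmetric subset on which the optimum is attained. The key point is that the linear program $P$ is \emph{not} required to be $G$-symmetric as a whole (the copy-lemma constraints in item~5 are deliberately non-symmetric), so we cannot invoke Lemma~\ref{vec} directly on $E$; instead we invoke Remark~\ref{last_par_vec}. Concretely, I would let $E$ be the projection onto the coordinates $(h_S)_{\varnothing\neq S\subseteq\{1,\dots,r+1\}}$ of the set of all vectors $(h_S)_{\varnothing\neq S\subseteq\{1,\dots,n\}}$ that are feasible for $P$, i.e.\ that arise from an actual entropy vector of random variables $(s_0,\dots,s_{r-1},s_r,\dots,s_{n-1})$ satisfying items 2--5. (Recall item~2 only asks for Shannon-type inequalities, but since the $s_{r+1},\dots,s_{n-1}$ are genuine random variables produced by the copy lemma, the feasible vectors are in fact restrictions of almost-entropic vectors; this is the standard soundness direction of Proposition~\ref{p:linear-program-for-ratio}, which we may assume.) The objective $\min x$ with $x\ge h_T$ over singletons $T\subseteq\{2,\dots,r+1\}$ is, after the normalization $h_{\{1\}}=1$, the linear functional $f$ whose value on a feasible vector is $\max_{T}h_T$; since $G$ permutes $\{2,\dots,r+1\}$ this $f$ is $G$-invariant by hypothesis.

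The heart of the argument is to show that $E$ contains a $G$-symmetric subset carrying an optimal solution — equivalently, that starting from any feasible entropy vector $h'$ realized by random variables $(s_0,\dots,s_{n-1})$ whose ``visible'' part $(s_0,\dots,s_r)$ satisfies the access-structure equalities, we can produce a feasible vector $h$ whose visible part is $G$-symmetric and which is no worse for $f$. I would do this by the usual orbit-averaging trick, but applied at the level of \emph{distributions}, not at the level of a single linear program. For each $\sigma\in G$, permuting the shares $s_1,\dots,s_r$ according to $\sigma$ (and fixing $s_0$) yields a new secret-sharing scheme for the \emph{same} access structure $A$ — because $\sigma$ lies in the symmetry group of $A$ by assumption — and then reapplying the \emph{same} pattern of copy-lemma constructions on top of this permuted scheme yields random variables $(s_0^\sigma,\dots,s_{n-1}^\sigma)$ whose entropy vector $\sigma\ast h'$ is again feasible for $P$ and has the same $f$-value. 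Taking the direct sum (independent product) of these $|G|$ distributions over $\sigma\in G$ and normalizing, one obtains a single distribution realizing $\tfrac1{|G|}\sum_{\sigma\in G}\sigma\ast h'$; its visible part is $G$-symmetric, its $f$-value equals that of $h'$ by invariance of $f$, and it still satisfies items 2,3,5 (the secret-sharing equalities and the copy-lemma equalities are preserved under taking independent copies and under the symmetrization, because each $\sigma\ast h'$ already satisfies them). Hence the $G$-symmetrized vector is feasible, and adding the constraints $h_S=h_{\sigma\cdot S}$ cannot raise the optimal value; it obviously cannot lower it either, since it only restricts the feasible set, so the amplified program has the same optimum and that optimum is still a valid lower bound on the information ratio of $A$.

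The main obstacle I anticipate is the bookkeeping needed to make precise the claim that ``reapplying the same copy-lemma pattern to the $\sigma$-permuted scheme gives a feasible vector whose entropy vector is exactly $\sigma\ast h'$.'' One has to be careful that the copy-lemma constructions in item~5 are specified by index patterns (e.g.\ ``$Z''$ is a $Y$-copy of $Z$ over $X$'' for certain tuples of original indices $2,\dots,r+1$), and that relabeling the original variables by $\sigma$ while keeping the \emph{added} variables' defining relations formally unchanged produces precisely the $\sigma$-image of the original constraint system; this needs the observation that $\sigma$ only touches indices in $\{1,\dots,r+1\}$ (actually $\{2,\dots,r+1\}$), so it commutes with the freshly introduced variables in the obvious way, and the equalities listed in item~5 transform covariantly. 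A secondary subtlety is that the averaged vector need not itself be an entropy vector, only almost-entropic / Shannon-feasible — but that is all item~2 demands, and convexity of the feasible polyhedron of $P$ (intersection of half-spaces and hyperplanes) is immediate, so averaging stays inside it regardless. Once these points are checked, the result follows from Lemma~\ref{vec} in the form of Remark~\ref{last_par_vec}.
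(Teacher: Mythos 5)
Your proposal is correct, and its skeleton (convexity plus orbit-averaging, invoked through Lemma~\ref{vec} and Remark~\ref{last_par_vec}) matches the paper's main proof. Where you genuinely diverge is in how the non-symmetric item-5 constraints are handled. The paper's primary argument symmetrizes the \emph{abstract} region of almost-entropic visible vectors satisfying the access-structure equalities, and then appends the copy-lemma constraints and the extended Shannon inequalities at the end, somewhat loosely; its secondary argument (Subsection~\ref{ss:a1}) instead builds a fully symmetric program $P'$ by adjoining all $\sigma$-conjugated copy-lemma instances and then deletes them. Your construction---re-running the \emph{same} copy-lemma pattern on each $\sigma$-permuted scheme to get $|G|$ full feasible vectors of the original $P$, then averaging inside the LP polyhedron---is essentially the distributional content of that second argument, but executed without ever enlarging the program, and it makes explicit the point the main proof glosses over: that the symmetrized visible vector can be completed to a point satisfying item~5 (you average already-completed vectors rather than trying to apply the copy lemma to a merely almost-entropic average). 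Two small cautions: the independent product of the $|G|$ distributions realizes the \emph{sum} of the entropy vectors, not the average, so the honest justification is the one you give second (convexity of the polyhedron of $P$, no realizing distribution needed); and the full vector of the permuted-and-recopied scheme is not literally $\sigma\ast h'$ on the coordinates involving the new variables---only its restriction to $\{1,\dots,r+1\}$ is the $\sigma$-image---but since the symmetry constraints are imposed only on those coordinates, this costs nothing.
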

\begin{proof}

Let $E$ be the closure of the set of almost entropic vectors in $\overline{\Gamma_{r+1}^*}$ that correspond to the secret sharing schemes of the access structure; i.e. vectors $(h_I)_{\varnothing\neq I\subseteq\{1,\dots r+1\}}$ satisfying the linear constraints (iii), (when \hbox{$\varnothing\neq I\subseteq\{2,\dots,r+1\}$} is an accepted set, $h_{\{1\}\cup I}-h_I=0$ and otherwise \hbox{$h_{\{1\}\cup I}-h_I=h_1$).}  This set is convex because it's a subset of the convex cone, defined by linear inequalities (if two vectors satisfy the same linear inequalities, so does their weighted average), moreover it's symmetric for $G$ as $\overline{\Gamma_{r+1}^*}$ is symmetric for $Sym_{r+1}\ge G$ and that the inequalities (iii) are symmetric for $G$ by definition.

The minimal information ratio can be (purely theoretically) described  as the optimal value of the following linear program: We take $2^{r+1}-1$ variables  $H_I$, $\varnothing\neq I\subseteq\{1,\dots,r+1\}$ for the components of entropic vectors in $\overline{\Gamma_{r+1}^*}$ and assume these vectors belong to $E$ (which is true for all secret sharing schemes on the access structure). Then we add one more variable $x$ to the linear program, with (i) as constraints; at last, we add the normalization condition $H_1=1$ (iv). Now, the minimal value of $x$ gives the optimal information ratio of the scheme.

Since the set $E$ is symmetric for the group $G$, we can add to our linear program the symmetry constraints $H_S=H_{\sigma\cdot S}$ for all $\sigma\in G$ and $S\subseteq \{1,\dots r+1\}$ without changing the optimal value of the objective function.

So far this argument was purely theoretical: we do not have a complete characterization of $\overline{\Gamma_{r+1}^*}$, and therefore we do not have a complete description of $E$.
(Even if such a characterization is eventually found, it will contain infinitely many linear constraints, see \cite{M07i}). In a more realistic setting, we take a linear program that contains not a precise description of $\overline{\Gamma_{r+1}^*}$ but only that of a convex superset, with some finite family of information inequalities, for example all the classical information inequalities. We can keep the constraints (i), (iii) and (iv); together with a subset of the constraints for $\overline{\Gamma_{r+1}^*}$, these will define a set $E' \supset E$. The minimal value of the objective function on  $E' $ can be smaller than on $E$, so our linear program may give not necessarily the exact value of the information ratio but only a lower bound for it. Note that we can also take all symmetry constraints that are valid for $E$ as stated in Remark~\ref{last_par_vec} too.

Observe that we can extend a distribution $(X_1, \ldots , X_{r+1})$ that represents a secret sharing scheme by adding a few new random variables $X_{r+2}, \ldots, X_n$ applying several times the copy lemma. For the entropies of the extended distribution we can write the constraints in the item (v) (the conditions for the new variables from the copy lemma) and  the information inequalities that are valid for the extended profile $(X_I)$ with $I\subseteq \{1,\ldots n\}$. In particular, we may take all classical information inequalities for all involved random variables.

Observe that we can keep the symmetry constraints only for the coordinates of the entropic vectors $(X_I)$ with $I\subseteq \{1,\ldots r+1\}$ and not for the coordinates $X_{r+2},\dots X_n$ introduced with the copy lemma, as the latter may be defined based on the former in a way to break the symmetries. For instance if $(23)\in G$, and we take a copy $X'_2$ of $X_2$, its relation to $X_3$ will not be the same as to $X_2$.

The linear program described this way, which provides a lower bound to the minimal information ratio of the access structure, indeed corresponds to the one stated in the theorem: We initially had the items (i), (ii), (iv) and a part of the item (ii), to which we have legitimately added the symmetries, and to the resulting linear program we have added the item (v) as well as the extension to full item (ii).

%
\end{proof}	

\begin{rem}\label{r:sym_help}
	If the linear program is symmetric for a group $G$, then the symmetry constraints will not change the optimal value. However, even in this case the symmetry constraints can be useful:  they may reduce the dimension of the linear program, therefore improve the computational complexity of the problem.
\end{rem}

\subsection{Adding symmetric version of the copy lemma}\label{ss:a1}

Another way of proving the Theorem~\ref{thm:sym} is based on the observation that we could make a linear program that is perfectly symmetric, including the constraints from the item (v). To this end we would need to add all ``symmetric'' forms of the used instances of the copy lemma. In practice we do not do this, since this would increase dramatically the  dimension of the linear program. However this argument explains and justifies Remark~\ref{r:sym_help} directly.

In effect the items (i) to (iv) are symmetric under the symmetry group $G$. Only the item (v) is not necessarily symmetric. However we can add all the symmetric copy lemmas to make the item (v) symmetric. Let us call this symmetric version of the linear program $P'$ (note that $P'$ may have a larger $n$ thus might be too costly in time to solve). As $P'$ is stable under $G$ (i.e. $G$ is a subgroup of the symmetry group of the set of solution vectors defined by $P'$) we can add the conditions claimed in this theorem to $P'$ and we still get a lower bound for the intended quantity. Adding these symmetry conditions to $P$ is legitimate from Remark~\ref{last_par_vec}. Another way to see  this is to simply add the symmetries to $P'$ and then delete the symmetric copy lemmas from the resulting linear program to obtain what we want. Deleting conditions will only decrease the objective value as we minimize the objective function. Therefore it is legitimate to add these conditions to $P$ and that they cannot give a better bound than the objective value of $P'$.

\smallskip

\begin{tikzpicture}
\draw (0,2) node {$P$};
\draw (6,2) node {$P'$};
\draw (0,0) node {$P$ + sym.};
\draw (6,0) node {$P'$ + sym.};
\draw[->] (0.2,2) -- (5.8,2) node[pos=0.5, align=center, above]{add symmetric copy lemmas};
\draw[->] (6,1.8) -- (6,0.2) node[pos=0.5, align=right] {add symmetries};
\draw[->] (5.1,0) -- (0.9,0) node[pos=0.5, align=center, below]{delete symmetric copy lemmas};
\draw[dotted,->] (0,1.8) -> (0,0.2) node[pos=0.5, align=left] {add symmetries};
\end{tikzpicture}
The objective values of these linear programs compare as: $val(P)\le val(P+sym)\le val(P'+sym)=val(P')$

		\subsection{Symmetry Group of Access Structures}\label{ss:Groups}
	The symmetry groups of these access structures are:
	\begin{itemize}
		\item $\mathcal{A}, \mathcal{A}^*$: $\langle(12)(56),(14)(36),(17)(35)\rangle$
		\item $\mathcal{F}, \mathcal{F}^*$: $\langle(12)(4576),(46)(57)\rangle$
		\item $\mathcal{Q}, \widehat{\mathcal{F}}, \mathcal{Q}^*$: $\langle(12)(47),(12)(56)\rangle$
	\end{itemize}

It is easy to check that these access structures are invariant under these permutations. In the following subsections we explain how to find for each access structure its group of symmetries.

\subsubsection{For $\mathcal{A}$ and $\mathcal{A}^*$ }

For $\mathcal{A}$, we can present the structure as in the image below:

\begin{tikzpicture}
\filldraw (0,0) circle (1pt) node[align=center, above] {$1$};
\filldraw (0,2) circle (1pt) node[align=center, above] {$2$};
\draw (-3,-1) -- (3,-1) node[align=right, above]{$(3)$};
\filldraw (-4,-2) circle (1pt) node[align=left, below]{$4$};
\draw (-1,2) -- (3,-2) node[align=right,below]{$(5)$};
\draw (-3,-2) -- (1,2) node[align=right,below]{$(6)$};
\filldraw (4,-2) circle (1pt) node[align=right,below]{$7$};
\end{tikzpicture}

\smallskip

Any two points with the line according to which they are at the same side is a minimum accepting set as well as all four points. In fact points are exactly those that appear more times in the list of minimal authorized sets and lines are those that appear less times. Now for any permutation on $\{1,2,4,7\}$, we have a permutation on $\{3,5,6\}$ such that the minimum authorized sets (described geometrically) are preserved. Let us take a generating set of $Sym(\{1,2,4,7\})=\langle(12),(14),(17)\rangle$, we have transpositions $(56)$, $(36)$ and $(35)$ respectively, which make them preserve the minimum authorized sets. Thus $\langle(12)(56),(14)(36),(17)(35)\rangle$ generate the symmetry group of $\mathcal{A}$.
The same argument works for $\mathcal{A}^*$, we just change the geometric definition of minimum authorized sets to ``any two points with the line according to which they are at the same side as well as any point with all three lines''.

\subsubsection{For $\mathcal{F}$ and $\mathcal{F}^*$}

For $\mathcal{F}$ we use the following geometric presentation:

\smallskip

\begin{tikzpicture}
\draw (-2,0) -- (2,0) node[align=right] {$(1)$};
\draw (0,2) -- (0,-2) node[align=center, below] {$(2)$};
\filldraw (-1,1) circle (1pt) node[align=left,above] {$4$};
\filldraw (1,1) circle (1pt) node[align=left,above] {$5$};
\filldraw (-1,-1) circle (1pt) node[align=left,above] {$6$};
\filldraw (1,-1) circle (1pt) node[align=left,above] {$7$};
\end{tikzpicture}

\smallskip

In fact $3$ is the only one not appearing in the minimum authorized sets of size four, thus it must be stable. $1$ and $2$ are the only ones appearing in both of them. Hence we have three orbits. Looking at the minimum authorized sets of size three, we get the rest of the image.  The authorized sets are:
\begin{itemize}
	\item with $3$:
	\begin{itemize}
		\item both lines
		\item two points separated by both lines
	\end{itemize}
	\item without $3$:
	\begin{itemize}
		\item two points with a line not separating them
		\item both lines with two points separated by both lines
	\end{itemize}
\end{itemize}
The symmetries are mirror images and rotations for this image, therefore they are generated by $(12)(4576)$ and $(46)(57)$. The same argument works for $\mathcal{F}^*$, we just change the interpretation of the image to define the minimum authorized sets as
\begin{itemize}
	\item ``with $3$:
	\begin{itemize}
		\item both lines
		\item a line with two points separated by both lines
		\item any three points
	\end{itemize}
	\item without $3$: two points with a line not separating them''.
\end{itemize}

\subsubsection{For $\mathcal{Q}$, $\widehat{\mathcal{F}}$ and $\mathcal{Q}^*$}

For $\mathcal{Q}$, $\widehat{\mathcal{F}}$ and $\mathcal{Q}^*$ we have four orbits: $\{1,2\}$, $\{3\}$, $\{4,7\}$ and $\{5,6\}$.

For $\mathcal{Q}$, the number of times the keys appear in minimum authorized sets of size four reveal the restrictions on the orbits: $1$ and $2$ three times, $3$ four times, $4$ and $7$ twice, $5$ and $6$ five times.

For $\widehat{\mathcal{F}}$, $3$ is the only one appearing four times among minimum authorized sets of size four, $1$ and $2$ appear twice, $5$ and $6$ five times, and $4$ and $7$ once.

For $\mathcal{Q}^*$,  $3$ is stable because it is the only one to appear only once in minimum accepted sets of size three. $1$ and $2$ are the only ones to be with $3$ in this appearance. $4$ and $7$ are the only ones to appear both with $1$ and $2$ in a single minimum authorized set of size four.

Hence the symmetry group is $\langle(12)(56),(12)(47)\rangle$. In fact, it can be checked that both generating elements preserve minimum authorized sets.

\subsection{Inherent sanity checks}

\begin{table*}[t]
	\centering
	\begin{tabular}{ ||c||c||c|c|c|| }
		\hline
		\thead{Access\\structure}&\thead{known lower bound\\based on AK lemma \cite{FKMP20}}&\thead{how we use copy lemma}&\thead{bounds we prove\\using symmetries}&\thead{weaker bounds we can\\prove without symmetries}\\
		\hline
		$\mathcal{A}$ & $\sfrac98=1.125$ & $0,3,4,7|1,2,5,6$ & $\sfrac{57}{50}=1.14$ & $\sfrac{135}{119}=1.134\dots$ \\
		\hline 
		$\mathcal{A}^*$ & $\sfrac{33}{29}=1.137\dots$ & \makecell{$5,6$-copy$(0,3|1,2,4,7)$ and\\ $0,0',3,3'$-copy$(1,2|4,5,6,7)$} & $\sfrac{52}{45}=1.1\overline{5}$ & $\sfrac{33}{29}=1.137\dots$ \\
		\hline
		$\mathcal{F}$ & $\sfrac98=1.125$ & $0,2,4,6|1,3,5,7$ & $\sfrac{17}{15}=1.1\overline{3}$ & $\sfrac{26}{23}=1.130\dots$\\
		\hline
		$\mathcal{F}^*$ & $\sfrac{42}{37}=1.\overline{135}$ & \makecell{$3,7$-copy$(0,4|1,2,5,6)$ and\\ $0,0',4',5$-copy$(1,4|2,3,6,7)$} & $\sfrac87=1.142\dots$ &  $\sfrac{42}{37}=1.\overline{135}$ \\
		\hline
		$\widehat{\mathcal{F}}$ & $\sfrac{42}{37}=1.\overline{135}$ & \makecell{$2,6$-copy$(0,4|1,3,5,7)$ and\\ $0,0',4',5$-copy$(1,4|2,3,6,7)$} & $\sfrac{23}{20}=1.15$ & $\sfrac{42}{37}=1.\overline{135}$\\ 
		\hline
		$\mathcal{Q}$ & $\sfrac98=1.125$ & \makecell{$0,2,4,6$-copy$(t,v|1,3,5,7)$ and\\$0,2,4,6,t',v'$-copy$(t,v|1,3,5,7)$\\with $t=(0,4)$ and $v=(2,6)$} & $\sfrac{17}{15}=1.1\overline{3}$ & $\sfrac{17}{15}=1.1\overline{3}$ \\ 
		\hline
		$\mathcal{Q}^*$ & $\sfrac{33}{29}=1.137\dots$ & \makecell{$3,7$-copy$(0,4|1,2,5,6)$and\\$0,0',4,4'$-copy$(1,5|2,3,6,7)$} & $\sfrac87=1.142\dots$ & $\sfrac{33}{29}=1.137\dots$ \\
		\hline
	\end{tabular}
	\smallskip
	\caption{Access structures for which we have improved lower bounds on the information ratio.}\label{table}
\end{table*}

\begin{rem}
	In practice, we can make errors while programming the symmetries for an access structure. However, there is fortunate ``sanity check'' embedded in this method.
	If there is an error in the symmetry group added as conditions to the linear program (if we use a wrong  group $H$ instead of the symmetry group $G$ of our access structure),
	then two cases are possible:
	\begin{itemize}
		\item either $H<G$, then we still get a valid lower bound, possibly worse than what we could achieve with the true group of symmetries of this access structure;
		\item or $H\nless G$, then we get an unfeasible program (with no solution). Indeed,  any element of $H\setminus G$ applied to the equalities in the item (iii) gives a contradiction. Namely, if $A$ is an accepted set but $\sigma\cdot A$ for a $\sigma\in H\nless G$ isn't, then as $H(secret|A)=0$, $H(secret|\sigma\cdot A)=0$ too, but this contradicts $H(secret|\sigma\cdot A)=H(secret)$. 
	\end{itemize}
\end{rem}		

	\section{Main results}

	Using the symmetries (Theorem~\ref{thm:sym}) and the copy lemma (Lemma~\ref{l:copy}), we have improved several lower bounds on the information ratios of the access structures (on matroids) given in \cite{FKMP20}.
	
	In the following theorem $\sigma(\Gamma)$ denotes the information ratio of the access structure $\Gamma$.
	\begin{thm}[Main Result]
		For the seven access structures defined in Section~\ref{acc_str_stu} we have the following lower bounds for their information ratios:
		\begin{itemize}
			\item $\sigma(\mathcal{A})\ge\sfrac{57}{50}=1.14$
			\item $\sigma(\mathcal{A}^*)\ge\sfrac{52}{45}=1.1\overline{5}$
			\item $\sigma(\Gamma)\ge\sfrac{17}{15}=1.1\overline{3}$ for $\Gamma\in\{\mathcal{F},\mathcal{Q}\}$
			\item $\sigma(\Gamma)\ge\sfrac87=1.142\dots$ for $\Gamma\in\{\mathcal{F}^*,\mathcal{Q}^*\}$
			\item $\sigma(\widehat{\mathcal{F}})\ge\sfrac{23}{20}=1.15$
		\end{itemize}
		See Table~\ref{table}
	\end{thm}
	\begin{proof}
		We prove these bounds with the help of computer. In what follows we explain the scheme of our proofs so that they can be reproduced independently, with any linear program solver.
	
	For each of the seven access structures we construct a linear program as explained in Proposition~\ref{p:linear-program-for-ratio}. In this linear program we use auxiliary random variables with one or two applications of the copy lemma (see column 3 of the Table~\ref{table}), and we add the constraints to express for each access structure the symmetry conditions (see Section~\ref{ss:Groups}). In this table $0$ represents the secret, and $1,2,3,4,5,6,7$ are the shares of the parties;  $Z|X$ denotes ``copy of $Z$ over $X$'' and $Y$-copy$(Z|X)$ denotes ``$Y$-copy of $Z$ over $X$''. In the column 4 of the Table~\ref{table} we show the resulting lower bound for the information ratio of each access structure. For comparison, in column 5 we show weaker bound (strictly weaker except for $\mathcal{Q}$) that can be proven with the same usage of the copy lemma but without symmetry conditions.
	
	Since the needed linear programs (described in Proposition~\ref{p:linear-program-for-ratio}) are too large to type them by hand, we generate them by a computer. To find the optimal values of these programs we use a linear program solver software \emph{Gurobi Optimizer 9.1.2} \cite{Gurobi}, and check the obtained bounds with an exact (rational) linear programs solver \emph{QSopt\_ex} \cite{QSopt_ex} based on the \emph{GNU Multiple Precision Arithmetic Library} \cite{GMP} and \emph{GNU Linear Programming Kit} \cite{GLPK}.
	
\end{proof}

	\section*{Acknowledgment}
	The author thanks Andrei Romashchenko for useful discussions.
\balance

\end{document}